\newcommand{\ignore}[1]{}
\newcommand{\notinproc}[1]{#1}
\newcommand{\onlyinproc}[1]{}
\newtheorem{thm}{Theorem}[section]
\newtheorem{lemma}[thm]{Lemma}
\newtheorem{corollary}[thm]{ Corollary}
\newcommand{\ellth}{\mathop{\textstyle {\ell}^{\text{th}}}}
\newcommand{\ith}{\text{i}^{\text{th}}}
\newcommand{\Inf}{\mathop{\rm Inf}}
\newcommand{\increase}[2]{\ensuremath{#1 \, {\overset{+}{\gets}}\, #2}}
\newcommand{\decrease}[2]{\ensuremath{#1 \, {\overset{-}{\gets}}\, #2}}
\newcommand{\revSO}{\text{{\sc RevSortedAccess}}}
\newcommand{\forSO}{\text{{\sc ForwardSearch}}}
\title{Greedy Maximization Framework for  \\ Graph-based Influence Functions}
\author{
Edith Cohen \\
{\large Google Research, USA}\\
{\large Tel Aviv University, Israel}\\
{\large edith@cohenwang.com}
}
\begin{document}
\maketitle
\thispagestyle{plain}
\pagestyle{plain}

\begin{abstract}
The study of graph-based submodular maximization problems 
was initiated in a seminal work of Kempe, Kleinberg,
  and Tardos (2003):  An 
{\em influence} function of
subsets of nodes is defined by the graph structure and the aim
is to find subsets of seed nodes with (approximately) optimal tradeoff of
size and influence. Applications include viral 
marketing, monitoring, and active learning of node labels.
This powerful formulation was studied for
 (generalized) {\em coverage} functions, where the influence of a seed set
on a node is the maximum utility of a seed item to the node,
and for pairwise {\em utility} based on reachability,
distances, or reverse ranks.    

 We define a rich class of influence functions which unifies and extends
 previous work beyond coverage functions and specific utility functions.
 We present a meta-algorithm for  approximate greedy maximization with
strong approximation quality guarantees and worst-case near-linear computation for all functions
in our class. Our meta-algorithm generalizes a recent design by 
Cohen et al (2014) that was specific for distance-based coverage functions.
\end{abstract}

\section{Introduction}
Submodular maximization problems are extensively studied and used in
many application domains.  The aim is to compute a subset $S$ of {\em
  items} of a certain size that maximizes some submodular and monotone
{\em influence} (valuation) function $\Inf(S)$. 

 Even for the special case of coverage problems, 
the problem of computing, for parameter $s$, a set $S$ of $s$ items 
 with maximum $\Inf(S)$ is NP hard and also hard to approximate.  
A simple and practical algorithm is {\em Greedy}, which sequentially selects the item $$i = \arg\max_h \Inf(h \mid S)$$
 with maximum {\em marginal influence}  $$\Inf(i \mid S) = \Inf(S \cup
 \{i\}) - \Inf(S)\ .$$
We refer to the computed 
permutation of items as the {\em greedy sequence} and to the process
as {\em greedy maximization}.

A classic result of Nemhauser et al \cite{submodularGreedy:1978} shows 
that for submodular monotone functions,  any
$s$ prefix of the greedy sequence of items has influence that is at
least $1-(1-1/s)^s \geq 1-1/e$ of the maximum possible. 
Feige~\cite{feige98} showed that this is the best
worst-case approximation ratio we can hope for by a polynomial time
algorithm.  Hence the greedy sequence approximates the full Pareto
front of seed set size versus influence.
A very useful relaxation
is {\em approximate greedy}
which selects in each step an approximate maximizer that has at least
$(1-\epsilon)$ of the maximum marginal influence.  Approximate greedy
often scales up the computation while lowering
the approximation ratio guarantee by at most $O(\epsilon)$.

  We are interested  here in influence functions that are expressed in
  terms of {\em utility} values $u_{ij}$ between our items and {\em elements}.
The influence of the seed set $S$  can  then be defined as 
 the sum over elements $j$  of the maximum utility of a seed item to $j$:
\begin{equation} \label{infmax:eq}
\Inf(S) = \sum_j \max_{i\in S} u_{ij}\ .
\end{equation}
Coverage functions are the special case where
for any elements $j$,  there is $c_j>0$ such that $u_{ij}\in \{0,
c_j\}$.
A natural extension replaces the maximum in \eqref{infmax:eq} 
with another submodular aggregation
function $F$ applied to the multiset $\{u_{ij} \mid i\in S\}$.

Graphs are a common model of representing relations between entities\onlyinproc{.}
\notinproc{:
Edges represent stronger affinity between their end points but more
generally, affinity can be derived from the ensemble of paths  connecting nodes.
A sparse graph can therefore represent dense and intiricate affinity
relations between all pairs of nodes.
With graph-based influence, items and elements are nodes in a graph
and utility values are affinity. Influence is derived from utility using max
aggregation \eqref{infmax:eq}.   Seed sets with high influence and
small size optimize coverage, diversity, and information and have low redundancy.
Among the many applications are selecting
anchors/hubs/coresets for monitoring/analysing/sparsifying the 
network,  active learning candidates
in semi-supervised learning context, or viral marketing in social
networks.  
Graph-based influence is rooted in classic graph formulation of the
set (maximum) cover problem and notions of centrality (influence of a single
vertex) in social networks
\cite{Bavelas:HumanOrg1948,Freeman:sn1979}.  It 
was popularized and extended in a seminal 
 work of Kempe, Kleinberg, and Tardos \cite{KKT:KDD2003}.}
\onlyinproc{The notion of graph-based influence was introduced in a seminal 
 work of Kempe, Kleinberg, and Tardos \cite{KKT:KDD2003}.
Items and elements are nodes in a graph and the utility function is
derived from the graph structure. Influence is computed from utility using max
aggregation \eqref{infmax:eq}.
Applications of high-influence seed sets include using them as 
anchors or hubs for monitoring or sparsifying the 
network,  active learning 
in semi-supervised learning context, or viral marketing in social networks. }

There are several natural ways to derive utility values from graph 
structure (See example
in Figure~\ref{example:fig}).  The hugely popular Independent Cascade (IC) model of
\cite{KKT:KDD2003} uses $u_{ij}=1$ if $j$ is reachable from $i$ and $0$
otherwise.  Gomez-Rodriguez et al proposed distance-based
(``continuous time'') utility, where  edge lengths model propagation
times \cite{Gomez-RodriguezBS:ICML2011,DSGZ:nips2013}:  For a
threshold parameter $T$, $u_{ij}=1$ if and only if $d_{ij}\leq T$,
where $d_{ij}$ is the shortest path distance.  More generally,
\cite{timedinfluence:2015}  considered utility
that smoothly decreases with distance.
Smooth distance-based utility was studied in social network analysis \onlyinproc{(centrality
is the influence of a single node)}
\cite{Bavelas:HumanOrg1948,Sabidussi:psychometrika1966,Freeman:sn1979},
economic models~\cite{BlochJackson:2007},
and general network analysis \cite{CoKa:jcss07}. 
Reverse-rank utility is based on the order induced by distances 
instead of the magnitudes and by doing that, it ``factors out'' the
effect of varying density.  The special case of ``reverse nearest
neighbors'' influence, where
$u_{ij}=1$ if $i$ is 
the nearest neighbor of $j$, was formalized by Korn and 
Muthukrishnan \cite{KornMuthu:sigmod2000}: The influence of a seed set is 
the number of reverse nearest neighbors it has.  Buchnik and Cohen 
\cite{reverseranks:sigmetrics2016} generalized it to
higher order reverse near neighbors and smooth decrease with reverse
rank.

 A powerful extension is to use a
randomized model that selects edges or edge lengths from the base
 graph \cite{KKT:KDD2003,CDFGGW:COSN2013,ACKP:KDD2013}.
 Utility  values are then defined as the expectation.  
\notinproc{This allows reachability or distance-based utility values to capture finer properties of the connecting paths ensemble:  In particular,
utility increases not only for shorter paths but also with more
disjoint paths.  We note here that popular
spectral kernels (e.g. effective resistances or hitting probabilities of random walks) 
\cite{Chung:Book97a} share these basic qualities.}
\onlyinproc{This allows our
 reachability or distance-based utility values to capture finer properties of the connecting paths ensemble:  In particular,
utility increases for shorter paths but also increases with more
disjoint paths.}
The IC model of
 \cite{KKT:KDD2003} selects edges from the base set with independent
 probabilities.  Distance-based models assign edge lengths drawn from
(typically exponential or Weibull)
 distributions \cite{Gomez-RodriguezBS:ICML2011,DSGZ:nips2013,CDFGGW:COSN2013,ACKP:KDD2013}.   Computationally, randomized models are handled by
 generating multiple sets of edges using Monte Carlo simulations and
averaging over simulations.

  Massive data sets pose significant scalability challenges.
The exact greedy algorithm is polynomial but we seek algorithms that
are near linear.  A fruitful research thread proposed many
different heuristics and algorithms for influence maximization
\cite{Leskovec:KDD2007,CWY:KDD2009,CELFpp:WWW2011,TXS:sigmod2014}.
Most of these algorithms can guarantee both near-linear computation
and approximation quality only for small seed set sizes.  The only
approach  that computes a full approximate greedy sequence using
near-linear computation is the SKIM (sketch
based influence maximization) algorithm
of Cohen et al \cite{binaryinfluence:CIKM2014}, which 
efficiently maintains samples of the marginal influence sets of nodes.
SKIM was initially
designed  for reachability-based influence and extended to
distance-based \cite{timedinfluence:2015} and reverse-rank influence
\cite{reverseranks:sigmetrics2016}.  

\begin{figure}
\center
\mbox{
{\scriptsize 
\begin{tabular}{l|ll}
\multicolumn{3}{c}{utility  with $\alpha(x)=1/x$} \\
distance &   $d_{BA}=2$ &$u_{BA}=1/2$ \\ & $d_{CA}=1$ &$u_{CA}=1$\\
       &$d_{DA}=5$ & $u_{DA}=1/5$\\
\hline
reverse- & $\pi_{AB}=3$& $u_{BA}=1/3$\\
rank      & $\pi_{DB}=4$& $u_{BD}=1/4$\\
\hline
survival- &  $\tau_{AB}=2$ &$u_{AB}=2$ \\
thresh &  $\tau_{DB}=1$ &$u_{DB}=1$
\end{tabular}}}
\raisebox{-0.4in}{\includegraphics[width=0.2\textwidth]{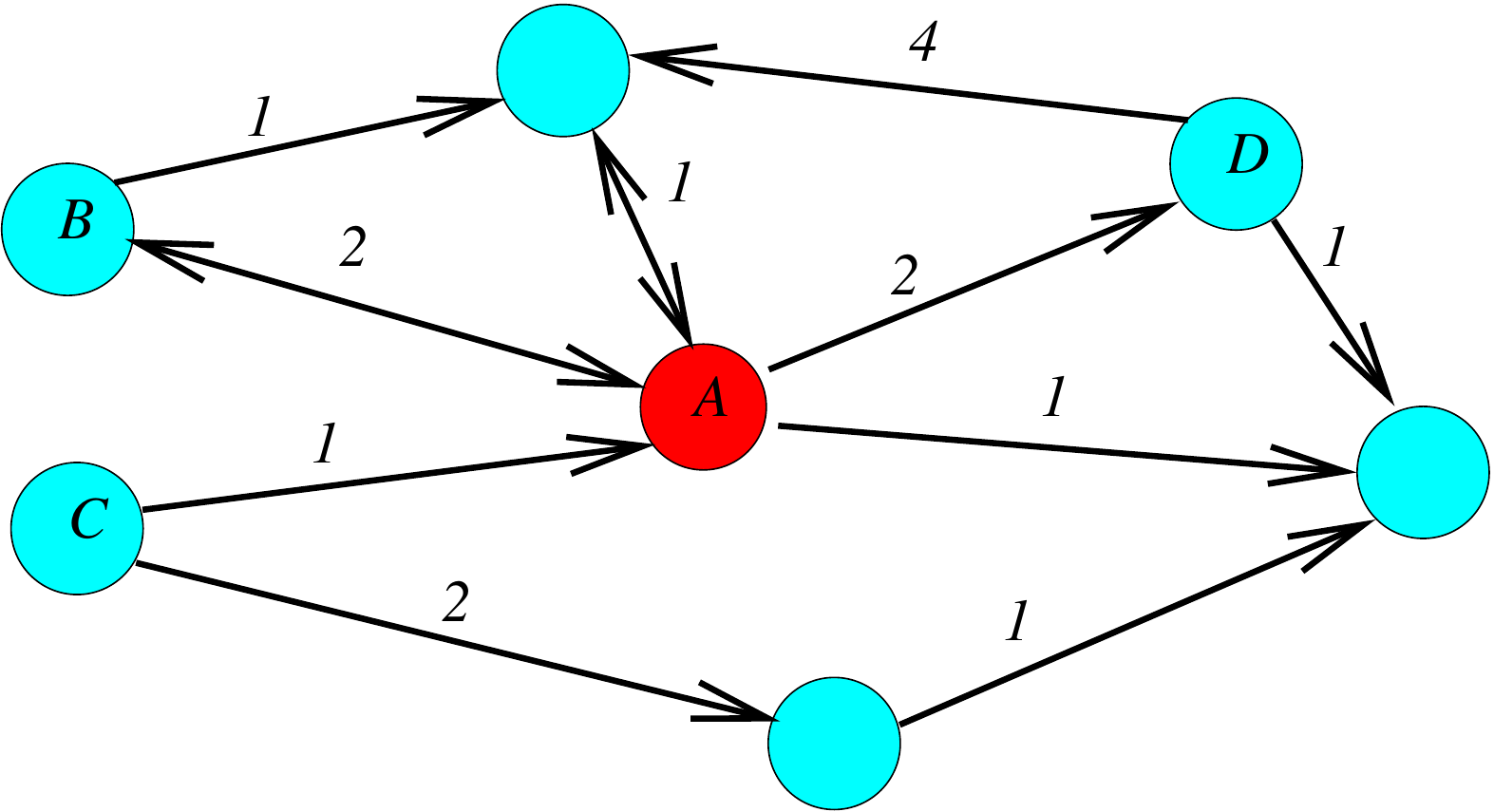} }
\caption{{\small Utility example for toy graph with edge lengths/lifetimes}
\label{example:fig}}
\end{figure}

\subsection*{Contribution and outline}
  Our main contribution here is presenting a general framework for
efficient  greedy influence maximization that generalizes and extends this
previous work.

Previous work focused on influence functions with
 {\em max aggregation} \eqref{infmax:eq},  where
the contribution of an element $j$ to 
the influence of the seed set $S$ is equal to the maximum utility
$\max_{i\in S} u_{ij}$ of a seed item.
With max aggregation, the value of a seed set of videos to a user is equal to
 that of their favourite video in the seed set.  Often,  however,
 elements derive additional value from other seed items.  
In our example, user $j$ may be 
 able to watch another video and thus values its second-favourite seed
 video $i$ at  $u_{ij}/2$.  
\notinproc{
In our toy graph example of Figure~\ref{example:fig} when considering
distance-based utility, the value of the seed set $\{B,C,D\}$ to
node $A$ is $1$ with max aggregation, and is equal to $u_{CA}$,  but
is $u_{CA}+u_{BA}/2 = 1.25$ when the second
favourite $B$  contributes half of $u_{BA}$.}
 In Section~\ref{extint:sec} we  define {\em submodular top-$\ell$}
aggregation  functions that depend on
the $\ell$ highest utility values of seed items to the element.
We also show that when the utility matrix $\{u_{ij}\}$ is provided explicitly,
an approximate greedy 
sequence can be computed in time that is 
near-linear in the number of nonzero  entries.

In a graph setting, explicit computation of $\{u_{ij}\}$ is
computationally prohibitive.
  SKIM and its extensions are efficient because dense utility values $\{u_{ij}\}$ are represented by a sparse
structure and SKIM performs the greedy maximization using computation
that is near-linear in the size of the structure.  SKIM  ultimately
obtains utility values for  a number of pairs that is near-linear
in the number of nodes.  In Section \ref{oracles:sec}
we present an abstraction of two 
access primitives  to the utility matrix that suffice for performing a 
``SKIM-like'' influence maximization computation:
{\em reverse
  sorted access} oracle and {\em forward search} oracle.  
In Section~\ref{metaskim:sec} we present the {\em SKIM meta-algorithm}  that performs
approximate greedy influence maximization for influence functions
with {\em submodular top-$\ell$} aggregation using near-linear computation and number
of oracle calls.

In Section~\ref{graphoracles:sec} we tie back our meta-algorithm to 
previous work by reviewing how the two access oracles
 are realized for reachability, distance, and reverse rank utilities.
 We also present oracles for {\em Survival threshold}
 utility \cite{semisupInf:2016}, which generalize 
 reachability utility\notinproc{ and is inspired by survivability
   analysis \cite{MillerSurvivalAnalysis:book}}:  For graph with edge weights interpreted as
{\em lifetime} values, the
survival threshold $\tau_{ij}$ is the maximum $t$ such that $j$ is
reachable from $i$ through edges with lifetime at least $t$. We use
$u_{ij} = \tau_{ij}$ (See example in Figure~\ref{example:fig}).

Our near-linear meta-algorithm for 
approximate greedy maximization is the first to apply 
to influence functions with (i)~previously-studied 
utility and aggregation function other than maximum,
(ii) smooth reverse-rank utility, and (iii) survival threshold utility.

\section{Influence functions} \label{extint:sec}
  We consider here influence functions of a particular form.
We have  $n$ {\em items},
{\em elements}, pairwise {\em utility} values $u_{ij} \geq 0$ of an 
item to an element, and an {\em aggregation function} $F$ that
is applied to a multiset of numbers.

We define the utility $$u_{Sj} = F(U_{Sj})$$ of a seed set $S$ of items to an
element $j$ as the aggregation function $F$ applied to the multiset
of pairwise values $$U_{S}(j) = \{u_{ij} \mid i\in S\}\ .$$
Finally, the {\em influence} of a seed set $S$ of items 
is defined as the sum over 
elements $j$ of the utility $u_{Sj}$ of $S$ to the element
$$\Inf(S) = \sum_j u_{Sj} = \sum_j F(U_{S}(j))\ .$$

The simplest and most common aggregation function is the maximum
$F(U_{Sj}) = \max_{i\in S} u_{ij}$.   
We define a natural class of more general aggregations that
are monotone 
 submodular functions  of the $\ell$ largest values in $U_{Sj}$. We start with 
a useful definition of a {\em domination} partial order on
multisets of positive numbers:
$$A \succeq B \iff \forall i, \ith(A) \geq \ith(B), $$
where $\ith(A)$ is the $i$th largest value in $A$ when $i\leq |A|$ and
is $0$ otherwise. 
For a parameter $\ell$, a function $F$ is {\em submodular top-$\ell$} if and only if:
\begin{eqnarray}
&& F(\emptyset) =   0 \\
&& \forall a>0,  F(\{a\}) = a \\
&& \forall A,\ F(A) = F(\text{top-}\ell(A)) \\
&& \forall A,B\   A \succeq B \implies  F(A) \geq F(B)\ . 
\end{eqnarray}
Some examples of {\em submodular top-$\ell$} functions are:
 max $F(A) = \max_{a\in A} a$,  sum of top-$\ell$ values $F(A)=\sum_{i=1}^\ell \ith(A)$, or a weighted sum $F(A)=\sum_{i=1}^\ell \frac{1}{i} \ith(A)$

\begin{lemma} \label{growinv:lemma}
If $F$ is {\em submodular top-$\ell$}  then
$$\forall A,B,C \, \,  A \succeq B \implies  F(A\cup C) \geq F(B\cup C)\ . $$
\end{lemma}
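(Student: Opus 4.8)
The plan is to reduce the claim about $F(A\cup C)\ge F(B\cup C)$ to the already-assumed property that $F$ is monotone with respect to the domination order $\succeq$. Concretely, it suffices to show that $A\succeq B$ implies $A\cup C \succeq B\cup C$; then applying property (4) directly yields $F(A\cup C)\ge F(B\cup C)$. So the whole argument collapses to a statement about the multiset domination order, with no reference to submodularity or the top-$\ell$ truncation at all.

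First I would unwind the definition: $A\cup C \succeq B\cup C$ means that for every rank $i$, the $i$th largest element of $A\cup C$ is at least the $i$th largest element of $B\cup C$. Fix $i$ and let $t$ be the $i$th largest value of $B\cup C$. The key step is a counting argument: $t$ being the $i$th largest in $B\cup C$ means that at least $i$ elements of the multiset $B\cup C$ are $\ge t$; splitting this count, if $k$ of them come from $B$ and $i-k$ from $C$, then $B$ has at least $k$ elements $\ge t$, i.e. $k^{\text{th}}(B)\ge t$, and hence by the hypothesis $A\succeq B$ we get $k^{\text{th}}(A)\ge k^{\text{th}}(B)\ge t$, so $A$ also has at least $k$ elements $\ge t$. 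Combining the (at least) $k$ elements of $A$ that are $\ge t$ with the (at least) $i-k$ elements of $C$ that are $\ge t$ — these are disjoint contributions since they come from the two different multisets being unioned — gives at least $i$ elements of $A\cup C$ that are $\ge t$. Therefore the $i$th largest element of $A\cup C$ is $\ge t = i^{\text{th}}(B\cup C)$, which is exactly what we need.

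I expect the only real subtlety to be bookkeeping with multisets: one must be careful that "at least $i$ elements $\ge t$" is the correct characterization of the $i$th largest value (handling ties, and the convention that $i^{\text{th}}(\cdot)=0$ when the multiset is too small), and that the union $B\cup C$ is taken with multiplicity so the split into "$k$ from $B$, $i-k$ from $C$" is legitimate. None of this is deep, but it is the place where a sloppy argument could go wrong. Once the monotonicity lemma $A\succeq B \Rightarrow A\cup C\succeq B\cup C$ is in hand, the proof of Lemma~\ref{growinv:lemma} is a one-line invocation of property (4).
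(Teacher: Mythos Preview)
Your proposal is correct and follows exactly the paper's approach: the paper's proof consists of the single observation that $A \succeq B \implies A\cup C \succeq B\cup C$, after which property~(4) applies directly. You have actually supplied the counting argument that justifies this multiset domination fact, which the paper simply asserts, so your write-up is strictly more detailed than the original.
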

\begin{proof}
Note that for all $C$, $$A \succeq B \implies A\cup C \succeq B\cup C$$
\end{proof}

 We show that when the utility matrix $u_{ij}$ is provided explicitly, the {\em lazy approximate} greedy algorithm
 has a guaranteed approximation ratio of $1-1/e-\epsilon$ using computation that is
 near-linear in input sparsity (number of nonzero entries).

\subsection{Utility Digest} \label{digest:sec}
To efficiently compute marginal utility as seed items are added, we maintain
a {\em utility digest}, $\digest{j}$, for each element $j$. The digest is a
summary of $U_{Sj}$ with  internal implementation that depends on the aggregation 
function $F$. It always suffices to store the  $\ell$ largest values
in $U_{Sj}$, but a compact representation (e.g. histograms) suffices
for some $F$.  We will use the following operations:
{\onlyinproc{\small}
\begin{itemize}
\item
$\digest{j}.init$  
initializes an empty digest with threshold $0$.
\item
$\digest{j}.thresh \gets \inf_x F(U_{Sj}\cup\{x\}) >  F(U_{Sj})$
returns the threshold value $x$ that can increase utility of the seed set.
\item 
$\digest{j}.marg(x) \gets F(U_{Sj}\cup\{x\}) - F(U_{Sj})$
returns the marginal gain of adding a seed $i\not\in S$ with utility $x=u_{ij}$
\item 
$\digest{j}.{\text AddMarg}(y,x) \gets F(U_{Sj}\cup\{y,x\}) -
F(U_{Sj}\cup\{y\})$
For two items $h,i\not\in S$ with utility $y=u_{hj}$ and $x=u_{ij}$,
the marginal gain of adding $i$ if $h$ is already added to $S$.
\item 
$\digest{j}.val \gets F(U_{Sj})$
\item 
$\digest{j}.update(x)$:  Compute a digest of $U_{Sj} \cup\{x\}$ given $x$
and digest of $U_{Sj}$. Updating digest with a new seed item $i$ with utility $u_{ij}=x$.  
\end{itemize}
}
\subsection{Approximate lazy greedy} 
The algorithm maintains all items $i$  in a max heap with
 priorities equal to their marginal influence at the time of
 insertion.  The initial priority of $i$ is $\Inf(\{i\}) = \sum_j u_{ij}$.  
We iterate the following until the heap is empty.  We
 pop the item $i$ at the top of the heap and
 compute its exact marginal influence
$$\Inf(i \mid S ) = \sum_j \digest{j}.marg(u_{ij})\ .$$
 If  $\Inf(i \mid S )$ is at least
 $(1-\epsilon)$ of the current priority of item $i$, it is added to the
 seed set $S$ and we update $\forall j,\ \digest{j}.update(u_{ij})$.
Otherwise, if $\Inf(i \mid S) > \max_h\Inf(\{h\})/n^2$, item $i$ is placed back in the heap with current priority
 equal to $\Inf(i \mid S )$.

Since this is an approximate greedy sequence,
we have a guaranteed approximation ratio 
of $1-(1-1/s)^s - \epsilon$ for each $s$ prefix of the sequence.  The
computation is near linear when $\ell$ is small:
\begin{lemma}
Approximate lazy greedy uses $$O(m \ell \epsilon^{-1} \log n +
\epsilon^{-1} n \log^2 n)$$
computation, where $m$ is 
the number of nonzeros in $u_{ij}$.  
\end{lemma}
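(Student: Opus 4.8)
The plan is to bound separately (i)~the total number of times an item is popped from the max heap over the whole run, and (ii)~the work done per pop; everything else --- initializing the digests, computing the initial priorities $\Inf(\{i\})=\sum_j u_{ij}$, and heapifying --- costs $O(m+n)$ and gets absorbed.

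The crux is (i), where I would argue that each item $i$ is popped only $O(\epsilon^{-1}\log n)$ times. Track the priority of $i$ as it moves in and out of the heap. Its first priority is $\Inf(\{i\})\le P$, where $P:=\max_h\Inf(\{h\})$. Each time $i$ is popped and re-inserted rather than added to $S$, its new priority is the freshly computed $\Inf(i\mid S)$, and the re-insertion test forces this to be strictly less than $(1-\epsilon)$ times its previous priority; in particular priorities never rise on re-insertion. Hence after $k$ re-insertions the priority of $i$ is below $(1-\epsilon)^k P$. But re-insertion only happens while this value exceeds $P/n^2$, so $(1-\epsilon)^k > n^{-2}$, i.e. $k < 2\ln n/\ln(1/(1-\epsilon)) \le 2\epsilon^{-1}\ln n$ using $\ln(1/(1-\epsilon))\ge\epsilon$. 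Adding the one final pop at which $i$ is accepted into $S$ or discarded, $i$ is popped $O(\epsilon^{-1}\log n)$ times, so over all $n$ items there are $O(n\epsilon^{-1}\log n)$ pops. (This step uses only the re-insertion rule, not submodularity or Lemma~\ref{growinv:lemma}.)

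For (ii): each pop does one heap delete-max and at most one heap insert, costing $O(\log n)$, and it recomputes $\Inf(i\mid S)=\sum_{j:\,u_{ij}\ne 0}\digest{j}.marg(u_{ij})$, which is $\deg(i)$ digest calls where $\deg(i)$ is the number of nonzeros in row $i$ of $u$; since a digest keeps the $\ell$ largest values, each of its operations runs in $O(\ell)$ time, so a re-evaluation of $i$ costs $O(\deg(i)\,\ell)$. When $i$ is actually added to $S$ we also run $\digest{j}.update(u_{ij})$ for each $j$ with $u_{ij}\ne 0$, again $O(\deg(i)\,\ell)$, but each item is added at most once, so these contribute only $O(\sum_i \deg(i)\,\ell)=O(m\ell)$ overall.

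Summing over the $O(n\epsilon^{-1}\log n)$ pops: the heap work is $O(\epsilon^{-1}n\log^2 n)$; the digest work for re-evaluations is $\sum_i O(\epsilon^{-1}\log n)\cdot O(\deg(i)\,\ell)=O(\epsilon^{-1}\ell\log n)\sum_i\deg(i)=O(m\ell\epsilon^{-1}\log n)$; and the $O(m\ell)$ update cost together with the $O(m+n)$ initialization are dominated. This yields the stated bound. The only delicate point is the geometric-decrease argument in (i): one must be sure that a re-inserted priority is exactly the new marginal value (so it strictly shrinks by the factor $(1-\epsilon)$) and that the $P/n^2$ floor is hit after only $O(\epsilon^{-1}\log n)$ such shrinkages.
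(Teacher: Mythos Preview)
Your proposal is correct and follows essentially the same approach as the paper's proof: bound the number of re-insertions per item by $O(\epsilon^{-1}\log n)$ via the geometric decrease of priorities against the $P/n^2$ floor, charge $O(\ell\,\deg(i))$ digest work and $O(\log n)$ heap work per pop, and sum over items. Your write-up is more explicit about the arithmetic and about separating initialization and update costs, but the structure and the key ideas match the paper exactly.
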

\begin{proof}
  Each marginal influence computation for item $i$  amounts to $2m_i$
 digest operations, typically $O(\ell)$ each, where
$m_i$ is the number of elements $j$ with $u_{ij}>0$.  Thus it is
$O(\ell m_i)$.

Each time an item is placed back on heap, 
its marginal influence decreased by at least a factor of $(1-\epsilon)$ from
the value it had when previously placed on the heap.  Hence, this can
happen $\epsilon^{-1} \log n$ times.  Therefore, the total computation
for item $i$ is $O(\epsilon^{-1} \ell m_i \log n )$ and $O(\epsilon^{-1}\log^2 n)$ for heap operations.  The claim follows by summing over items $i$.
\end{proof}

\ignore{
\section{Graph-based influence functions}
In a graph-based context, items and elements are 
mapped to vertices and potentially dense pairwise utility 
 values are implicitly defined by a sparse
graph model.
We instead seek algorithms   with computation that is near linear
in this sparse representation.   

 The representation of the utilities is often enhanced by working with 
a set of graphs, all with the same set of nodes and different edge
sets, or with a distribution over the set of edges.  The utility is
then defined as the corresponding average or expectation.

 Graph-based influence maximization was introduced in a seminal
 work of Kempe, Kleinberg, and Tardos \cite{KKT:KDD2003}.   Stated in
 our formulation, their independent cascade (IC) model is a coverage
 problems where the utility $u_{ij}$ is 1 if $j$ is reachable from $i$
 and is $0$ otherwise.  The IC model consider a distribution where
 edges are included with independent probabilities.

An extensive thread of research on scalable influence maximization
algorithms followed.  Based on both heuristics and theory.    See book
\cite{}

 Cohen et al SKIM algorithm -- the first and only to provide
 worst-case guarantee on computation and quality even for static
 graphs.  Approximate greedy solution.

Du et al \cite{DSGZ:nips2013} proposed are enhanced model which they
named ``continuous time.''  In this model edges have associated (randomized)
propagation times.  When propagation times are interpreted as
distances, the utility $u_{ij}$ is the (probability) that $d_{ij}
\leq T$.    Cohen et al \cite{timedinfluence:2015}  extended this to a
smooth model where $u_{ij}$ is a non-increasing function of $d_{ij}$.
This formulation goes beyond coverage functions and allows influence
to finely depend on propagation time.  Still max aggregation was used.  
They also extended SKIM to the distance-model.  The extension beyond
coverage functions is very involved.

 Korn and Muthukrishnan considered an influence model based on ranks
 induced by distance order instead of magnitudes.    
In their model
 $u_{ij} = 1$ when $i$ is the nearest neighbor of $j$ ($j$ is a {\em reversed} nearest neighbor of $i$).
Buchnik and Cohen \cite{reverseranks:sigmetrics2016} extended this model to using higher order
neighbors and also smooth function that are decreasing with rank
order.  They implemented a threshold (coverage functions) variant of
SKIM together with novel approximate reverse rank graph search algorithms.

  Another interesting model is inspired by survivability analysis
  model of paths.  Maximum edge lifetime such that $j$ is reachable
  from $i$ through edges with at least that lifetime \cite{}.   Was
  not studied in the context of influence maximization.
  
  Our work here unifies all these utility models (reachability,
  distance, revere rank, survival time), larger class of aggregations,
  and smooth (top-$\ell$ coverage functions) for utilities.

  The algorithms are stated in terms of basic forward and backward pruned
  search primitives.  The search primitives are according to the
  influence function (reach, distance, etc.).

\ignore{
\subsection{Greedy maximization}
  Lazy (exact or approximate) Greedy algorithm:  

 Initialization:
  For all items $i$, compute $\Inf(\{i\}) = \sum_j u_{ij}$. Place $i$ with priority $\Inf(\{i\})$ in a max heap.

 Iterate the following:
  Pop the maximum priority $i$ in the heap.  Its current priority was computed with respect to some seed set $S_t$. 
Compute its marginal influence with respect to $S$.

  If marginal influence is at least that (approximate version: $\geq (1-\epsilon)\times$) the current max, select $i$ to be the next seed.
Otherwise, place $i$ on the heap with priority $\Inf(i \mid S)$.

}
}
In the sequel we address settings where $u$ is dense ($m$ is much
larger than the number of items and elements)
or expensive to compute and show how the maximization can be performed
by only accessing ``relevant'' entries.

\section{Oracle access to utility values} \label{oracles:sec}

We define two access oracles to the utility matrix $\{u_{ij}\}$ that
allow us to perform approximate greedy maximization
while only retrieving a fraction of entries:
{\em Reverse sorted access} from elements, and {\em forward search} from items.


\subsection{Reverse sorted access}
The reverse sorted access oracle $\revSO$ for element $j$ returns
items and utility value pairs $(i,u_{ij})$ in non-increasing order of $u_{ij}$.
It supports the following operations:
{\small
\begin{itemize}
\item
$\revSO[j].init$: Initialize reverse sorted access from $j$
\item
$\revSO[j].top$: Return  $u_{ij}$ of next item without popping it.
\item
$(i,u_{ij}) \gets \revSO[j].pop$: pop return the next item in the sorted
order of $u_{ij}$.
\item
$\revSO[j].delete$:  Delete the data structure.
\end{itemize}
}

We use this oracle as the seed set $S$ grows. We show that for all
influence functions in our class and all
seed sets $S$, the marginal utility order  is the same as the utility order.
\begin{corollary}
$$\forall S,\ u_{ij} \leq u_{hj} \implies (u \mid S)_{ij} \leq (u \mid 
S)_{hj}\ ,$$
where $$(u \mid S)_{ij} \equiv u_{S\cup\{i\},j} - u_{S,j}\ .$$
\end{corollary}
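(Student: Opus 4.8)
The plan is to reduce the Corollary to Lemma~\ref{growinv:lemma} by exhibiting the multiset $U_S(j)$ as the common ``$C$'' in that lemma, and by showing that the two singletons $\{u_{ij}\}$ and $\{u_{hj}\}$ stand in the domination order $\{u_{hj}\} \succeq \{u_{ij}\}$ precisely when $u_{ij}\le u_{hj}$. Unwinding the definition of the marginal, $(u\mid S)_{hj} = F(U_S(j)\cup\{u_{hj}\}) - F(U_S(j))$ and $(u\mid S)_{ij} = F(U_S(j)\cup\{u_{ij}\}) - F(U_S(j))$, so after cancelling the common term $F(U_S(j))$ it suffices to show $F(U_S(j)\cup\{u_{hj}\}) \ge F(U_S(j)\cup\{u_{ij}\})$. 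This is exactly the conclusion of Lemma~\ref{growinv:lemma} applied with $A=\{u_{hj}\}$, $B=\{u_{ij}\}$, $C=U_S(j)$, provided $\{u_{hj}\}\succeq\{u_{ij}\}$.

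So the key step is to verify that singleton comparison: $\{u_{hj}\}\succeq\{u_{ij}\}$ iff for every $i$, the $i$th largest element of $\{u_{hj}\}$ is at least the $i$th largest element of $\{u_{ij}\}$. For $i=1$ this says $u_{hj}\ge u_{ij}$, which is our hypothesis; for $i\ge 2$ both sides are $0$ by the convention in the definition of $\ith$, so the inequality holds trivially. Hence the hypothesis $u_{ij}\le u_{hj}$ is equivalent to domination of the two singletons, and the Corollary follows immediately.

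There is essentially no obstacle here — the statement is a direct corollary of Lemma~\ref{growinv:lemma} once the marginals are written out and the trivial singleton domination is observed. The only point that deserves a sentence of care is the sign/cancellation bookkeeping: one must confirm that both marginals are taken against the \emph{same} base set $U_S(j)$ (which they are, by the definition of $(u\mid S)_{ij}$ given in the statement), so that subtracting $F(U_S(j))$ from both sides is legitimate and the inequality in the right direction is preserved. One might also remark, as motivation, that this is precisely the property justifying the reverse-sorted-access oracle: since the ordering of marginal utilities $(u\mid S)_{\cdot j}$ never disagrees with the fixed ordering of base utilities $u_{\cdot j}$, a single pass over items in decreasing $u_{ij}$ order remains valid no matter how $S$ evolves.
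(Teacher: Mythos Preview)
Your proof is correct and follows exactly the paper's approach: apply Lemma~\ref{growinv:lemma} with $A=\{u_{hj}\}$, $B=\{u_{ij}\}$, and $C=U_{Sj}$. Your write-up is in fact more explicit than the paper's one-line proof, spelling out the singleton-domination check and the cancellation of the common $F(U_{Sj})$ term.
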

\begin{proof}
We apply Lemma~\ref{growinv:lemma} with $C=U_{Sj}$, $A=\{u_{hj}\}$ and $B=\{u_{ij}\}$.
\end{proof}

\subsection{Forward search}
The forward search oracle $\forSO$ for item $i\not\in S$ returns
all elements $j$ for which 
$(u\mid  S)_{ij} > 0$ 
along with the value $u_{ij}$.
The oracle  supports initialization (with  respect to the seed set
$S$) and retrieving element utility pairs:
\begin{eqnarray*}
&&
\forSO[i].init(S)\\
&&
(j,u_{ij}) \gets \forSO[i].pop\ .
\end{eqnarray*}

The implementation of forward search is subtle.
Our influence functions allow that for some $i,j,S$
$u_{ij} > u_{ih}$ and $(u \mid S)_{ij} < (u \mid S)_{ih}$, so 
the implementation can not just use a sorted order of $j$ by
$u_{ij}$ that is oblivious to the current seed set $S$, but has to
adapt to $S$ to work efficiently.

\section{Graph-based utility} \label{graphoracles:sec}

  In graph-based settings, the input is represented by a graph or a set of graph {\em instances}  $\{G^{(h)}(V,E^{(h)})\}$ (obtained for example from Monte Carlo simulations of a randomized model).  All instances share the same set $V$ of nodes, which correspond to items. 
Our elements $j$ are  node instance pairs $(v(j),h(j))$.  The edges 
are directed and can have associated weights $w^{(h)}$, with 
interpretation that varies between definitions of utility.

 Reverse-sorted access is implemented by an appropriate graph search
 algorithm that is executed incrementally from a node in an instance 
 corresponding to element $j$ and returns nodes that correspond to items
 in decreasing $u_{ij}$ order.  Forward search 
is guided by an appropriate basic search algorithm by increasing
$u_{ij}$ with the additional property that the search tree partial order on elements $j$ preserves
marginal utility order:
If $h$ is a descendant of $j$ then for all $S$, $(u \mid
S)_{ih} \leq (u \mid S)_{ij}$.   Our forward search follows the basic
search tree and accesses digest structures of visited elements to compute
marginal utility value $\digest{j}.marg(u_{ij})$.  The search is
pruned when the marginal utility is $0$. Pruning is critical to
efficiency, since ultimately we perform forward searches from all
items.
The tree order property is critical for the pruning correctness -- so that it does not prevent us 
from reaching elements for which $i$ has positive marginal utility 
$(u \mid S)_{i j}>0$. 
\notinproc{The total number of nodes visited in a forward search are those adjacent to
nodes with positive marginal utility. }

We use the digest structures to guide
the forward search pruning.  Our SKIM meta-algorithm will use the forward
search to update the digest structures. Two useful subroutines are
{\sc MargGain}, which computes the marginal
 influence of an item given the current seed set, and {\sc
   AddSeed} which also updates digests of elements to reflect the
 addition of the new item to the seed set. 

\begin{function} \caption{{MargGain}($i$): Marginal influence of 
    $i\not\in S$ }
{\small 
\KwIn{item $i$}
\KwOut{$\Inf(i \mid S)$}
$M \gets 0$ \tcp*{sum of marginal contributions} 
$\forSO{i}.init(S)$ \tcp*{init forward search from $i$ with respect to 
  $S$} 
\While{$(j,u_{ij}) \gets \forSO{i}.next \not= \perp$} 
    {$\increase{M}{\digest{j}.marg(u_{ij})}$
}
\Return{$M$}
}
\end{function}

\begin{function}\caption{AddSeed($i$): Update digests of all 
 elements}
{\small 
\KwIn{item $i$}
$M \gets 0$ \tcp*{sum of marginal contributions} 
$\forSO[i].init(S)$ \tcp*{init forward search from $i$ with respect to 
  $S$} 
 \While{$(j,u_{ij}) \gets \forSO{i}.next \not= \perp$} 
     {$\increase{M}{\digest{j}.marg(u_{ij})}$\; $\digest{j}.update(u_{ij})$
 }
\Return{$M$}
}
\end{function}

\subsection{Distance-based utility}
 Distance-based utility is defined using a non-increasing function
 $\alpha$.  For item $i$ and element $j$, $u_{ij} =
 \alpha(d^{(h(j))}_{iv(j)})$, where $d^{(h)}_{ij}$ is the shortest
 path distance from $i$ to $j$ in instance $h$ with edge lengths $w^{(h)}$.

Reverse sorted access for element $j$ is implemented by Dijkstra's
 algorithm from $v(j)$ in the transposed (reversed edges) instance $h(j)$. Initialization places $v(j)$ in the heap and 
nodes are returned by increasing $d_{ij}$ (decreasing $u_{ij} = \alpha(d_{ij})$). 
When incoming edges are sorted by length, 
the computation is dominated by the number of traversed edges, which are edges adjacent to 
returned nodes.  

 Forward search from item $i$ is implemented by running a copy of
pruned  Dijkstra from node $i$ in each instance $h$.  
The computation is guided by the digest structure and pruned at nodes $j$ with 
$u_{i (h,j)} < \ellth(U_{S (h,j)})$.
\notinproc{Pruning correctness is
  established in  Lemma~\ref{prunecorrect:lemma}.}

\subsection{Reverse-rank utility}
 Reverse-rank utility is similarity defined as  $u_{ij} =
 \alpha(\pi^{(h(j))}_{v(j),i})$, where the Dijkstra rank
 $\pi^{(h)}_{j,i}$ is defined as the number of nodes that
 are at least as close to $j$ as $i$ is.
 Following Buchnik and Cohen
 \cite{reverseranks:sigmetrics2016}, we
  work with approximate ranks $\hat{\pi}$. The graph is preprocessed
  to obtain all-distances sketches, which are used to compute
  $\hat{\pi}_{ij}$ from $d_{ij}$. Approximation is necessary because search by exact reverse ranks
is provably as hard as all-pairs
  shortest paths computation \cite{reverseranks:sigmetrics2016}.

 The forward search implementation uses a pruned approximate reverse-rank 
search~\cite{reverseranks:sigmetrics2016}, which also traverses a shortest-path tree.
\notinproc{The correctness of pruning is established in Lemma~\ref{prunecorrect:lemma}.}
The reverse sorted access oracles uses an adaptation of Dijkstra on the transposed graphs that
is guided by approximate ranks.

\ignore{
 which was applied with uniform
  utility.  They used a Dijkstra-like {\em approximate} reverse-rank
  sorted access search, since sorted-access search by exact reverse-rank is hard
\cite{reverseranks:sigmetrics2016} (equivalent to
 all-pairs shortest path computation), a Dijkstra-like approximate
 reverse-rank sorted access search.  
The design uses All-Distances sketches to obtain estimates
$\hat{\pi}_{ij}$ from distances $d_{ij}$. We use utility $u_{ij} =
\alpha(\hat{\pi}^{(h(j))}_{iv(j)})$ based on approximate reverse rank.
  The reverse sorted access oracle also works with $\hat{\pi}$ instead
  of $\pi$ in order to   match the forward search.
}

\notinproc{
  Interestingly, our framework does not handle ``forward'' rank utility $u_{ij}=\alpha(\pi_{ji})$, 
as we are not aware of an efficient reverse sorted access implementation that is stable when seed nodes are added.}

\subsection{Reach and survival threshold utility}
 Reachability utilities are $u_{ij}=1$ if and only if there is a path from $i$ to $v(j)$ in instance $h(j)$. The more general survival threshold utilities are defined as $u_{ij} = \tau^{(h(j))}_{iv(j)}$ where $\tau^{(h)}_{ij}$ is the maximum $t$ such that there is a path from $i$ to $j$ in instance $h$ using edges with $w^{(h)}_e \geq t$.
  The reverse sorted access and the forward search oracles are similar
  to distance utility, with Dijkstra-like survival threshold search
  algorithm \cite{semisupInf:2016} replacing Dijkstra's algorithm.
\notinproc{
A survival-threshold search tree from source node $i$ has the property
that for any node $j$, the lifetime of all edges $e$ on the path from
$i$ to $j$ have $t_e \geq \tau_{ij}$.  By definition of $\tau_{ij}$, there must be at
least one edge on the path with $t_e = \tau_{ij}$.
}

\notinproc{
\subsection{Pruning correctness}
We establish correctness of the pruning performed by the forward 
search for distance-based, reverse-rank, and survival-threshold utility. 
\begin{lemma} \label{prunecorrect:lemma}
With distance-based utility, let
node $j$ be on the shortest path from $i$ to $r$ in instance $h$.
With (exact or approximate) reverse-rank utility, let $j$ be on a
shortest path from $r$ to $i$.   With survival-threshold utility, let 
$j$ be on a maximum survival threshold search tree path from
$i$ to $r$ in instance $h$.  Then,
$$u_{i (h,j)} < \ellth(U_{S (h,j)}) \implies (u \mid S)_{i (h,r)}=0\ .$$
\end{lemma}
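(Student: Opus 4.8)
The plan is to establish the three cases uniformly by exploiting the common structural feature of all three search algorithms: the search tree from $i$ in instance $h$ has the property that the utility value $u_{i(h,j)}$ is \emph{monotone non-increasing} along any root-to-leaf path. For distance-based utility this is immediate since $d$ increases along shortest-path-tree edges and $\alpha$ is non-increasing; for survival-threshold utility the path from $i$ to $r$ passes through $j$ and every edge on the $i$-to-$r$ path has lifetime at least $\tau_{i(h,r)}$, forcing $\tau_{i(h,j)} \geq \tau_{i(h,r)}$; for (exact or approximate) reverse-rank utility the subtree rooted at $j$ sits strictly ``later'' in the Dijkstra order from $r$, so $\hat\pi_{r,j} \leq \hat\pi_{r,r'}$ hence $u_{i(h,j)} \geq u_{i(h,r)}$ (for approximate ranks this is exactly the monotonicity property guaranteed by the all-distances-sketch estimator, which I would cite from \cite{reverseranks:sigmetrics2016}).

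Granting that monotonicity, the first step is: since $r$ is a descendant of $j$ in the relevant search tree, we have $u_{i(h,r)} \leq u_{i(h,j)}$. The second step is to combine the hypothesis $u_{i(h,j)} < \ellth(U_{S(h,j)})$ with this inequality. Here I need a second observation about how the digest threshold relates across tree-comparable elements. The key point is that the pruning hypothesis says $u_{i(h,j)}$ is below the $\ell$-th largest value currently in $U_{S(h,j)}$, so by the defining axioms of a submodular top-$\ell$ function, adding $u_{i(h,j)}$ to $U_{S(h,j)}$ does not change $F$ — it is dominated out of the top $\ell$. I want to push this down to the descendant $r$. For that I would invoke the fact that the seed items whose utility to $j$ is large are, via the same search-tree structure, also seeds whose utility to $r$ is at least as large: concretely, for every seed $s \in S$ that lies on the relevant path structure we get $u_{s(h,j)} \leq u_{s(h,r)}$ when $j$ is an ancestor of $r$ — this is the dual monotonicity, and it is exactly why the tree order ``preserves marginal utility order'' as asserted just before the lemma. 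Hence $U_{S(h,r)} \succeq U_{S(h,j)}$, so $\ellth(U_{S(h,r)}) \geq \ellth(U_{S(h,j)}) > u_{i(h,j)} \geq u_{i(h,r)}$.

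The final step is then routine: since $u_{i(h,r)} < \ellth(U_{S(h,r)})$, the value $u_{i(h,r)}$ does not enter the top $\ell$ of $U_{S(h,r)} \cup \{u_{i(h,r)}\}$, and because $F(A) = F(\mathrm{top}\text{-}\ell(A))$ we conclude $F(U_{S(h,r)} \cup \{u_{i(h,r)}\}) = F(U_{S(h,r)})$, i.e. $(u \mid S)_{i(h,r)} = 0$, which is what we wanted. I expect the main obstacle to be the middle step — verifying that $U_{S(h,r)} \succeq U_{S(h,j)}$ for an ancestor $j$ of $r$. This requires arguing that \emph{every} seed already in $S$ reaches $r$ (through this instance's search structure) with utility no smaller than it reaches $j$; for distance and survival-threshold this follows from concatenating the source-to-$j$ tree path with the $j$-to-$r$ path and using the triangle-type inequality for the relevant metric, while for approximate reverse ranks it again rests on the monotonicity guarantee of the sketch-based estimator along shortest-path trees, which must be cited rather than re-derived. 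Care is also needed because the three utilities use different search trees (Dijkstra on the transposed instance for reverse ranks vs. ordinary Dijkstra vs. survival-threshold search), so the ancestor/descendant relationship and the direction of the traversal differ between cases and each must be checked separately against its own oracle definition.
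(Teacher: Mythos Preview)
Your central step—the ``dual monotonicity'' claim $U_{S(h,r)} \succeq U_{S(h,j)}$—is false in all three settings, and this is where the argument breaks. For distance-based utility, take a path $i \to j \to r$ with $d_{ij}=d_{jr}=10$ and a single seed $s$ with $d_{sj}=1$ (and no shorter route from $s$ to $r$). Then $d_{sr}=11$, so $u_{s,(h,j)}=\alpha(1) > \alpha(11) = u_{s,(h,r)}$ and hence $\ellth(U_{S(h,j)}) > \ellth(U_{S(h,r)})$, the reverse of what you need. Concatenating the $s$-to-$j$ path with the $j$-to-$r$ path yields only $d_{sr} \le d_{sj}+d_{jr}$, an \emph{upper} bound on $d_{sr}$; the triangle inequality points the wrong way for your purpose. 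The same construction adapts to survival thresholds (a high-lifetime edge $s\to j$ together with a low-lifetime edge $j\to r$ gives $\tau_{sj}>\tau_{sr}$) and to reverse ranks (a seed $s$ very close to $j$ has small $\pi_{j,s}$ but larger $\pi_{r,s}$). So you cannot compare the digests at $j$ and $r$ pointwise; the text you quote about the tree order ``preserving marginal utility order'' is precisely the content of this lemma, not an independent fact you may invoke.

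The paper's argument avoids that comparison. Working in distances and letting $y_a$ denote the $\ell$th smallest seed-to-$a$ distance, the hypothesis gives $d_{ij} \ge y_j$; the triangle inequality then yields only $y_r \le y_j + d_{jr}$, not $y_r \le y_j$, and it is the shortest-path identity $d_{ir}=d_{ij}+d_{jr}$ that closes the gap: $y_r \le y_j + d_{jr} \le d_{ij}+d_{jr}=d_{ir}$. The additive slack $d_{jr}$ appears on both sides and cancels; it cannot be dropped. The reverse-rank case reasons with distances (not ranks) for the $\ell$ witnessing seeds $x$: from $d_{jx} \le d_{ji}$ one gets $d_{rx} \le d_{rj}+d_{ji}=d_{ri}$, hence $\pi_{r,x}\le\pi_{r,i}$ for those $\ell$ seeds, so $\pi_{r,i}\ge y_r$. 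For survival thresholds the coupling is through the minimum-weight edge $e$ on the $j$-to-$r$ subpath: one shows $y_r \ge \min\{y_j,e\}$ and $\tau_{ir}=\min\{\tau_{ij},e\}$, and the shared term $e$ is what makes the inequality go through. Your outline can be salvaged by replacing the domination claim with these coupled inequalities.
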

\begin{proof}
We start with distance utility. 
For a node $a$, let $y_a$ be the $\ell$th smallest distance of $a$
from a node in $S$.  
A forward search from $i$ is pruned at a node $a$ if 
$d_{ia} \geq y_a$.  

  By definition, there are at least $\ell$ nodes $X\subset S$ such
  that $d_{jx} \leq y_j$.  From triangle inequality we have
  $$\forall x\in X,\ d_{xr}\leq d_{xj}+d_{jr} \leq y_j + d_{jr}.\ $$
Since there are at least $\ell$ nodes in $S$ with distance at most
$y_j + d_{jr}$ from $r$ we have
\begin{equation}\label{pc:eq3}  y_r \leq y_j + d_{jr}\ .\end{equation}
 Finally, if the search is pruned at $j$ then $d_{ij} \geq y_j$.  Combining
  with \eqref{pc:eq3} we obtain  $y_r \leq d_{ij}+d_{jr} = d_{ir}$.
Therefore, $(u \mid S)_{i (h,r)}=0\ .$

 For reverse-rank utility we apply a similar argument on a transposed
 graph.  We work with a shortest path from $r$ to $i$ that contains $j$. 
For a node $a$, let $y_a$ be the $\ell$th smallest rank $a$ has to a node in $S$. 
If the search is pruned at $j$ then 
$\pi_{ji} \geq y_j$ and in particular, there is a set $X\subset S$ of
$\ell$ nodes such that $\forall x\in X, d_{jx} \leq d_{ji}$.
 From the triangle inequality, for all $x\in X$,  $d_{rx} \leq d_{rj}+d_{jx}$.
From the shortest path property,  $d_{ri}=d_{rj} + d_{ji}$.
Combining, we obtain
\begin{equation}
\forall x\in X,\ d_{rx} \leq d_{rj}+d_{jx} \leq d_{rj}+ d_{ji} = d_{ri}\ .
\end{equation}
 This implies that $\pi_{ri} \geq y_r$ and therefore 
$(u \mid S)_{i (h,r)}=0$. 

We now consider survival-threshold utility.
For node $a$, let $y_a$ be the $\ell$th largest value in $\{\tau_{aj}
\mid a\in S\}$.
Since our path from $i$ to $r$ is a survival-threshold search path,
the minimum weight path edge between any two path nodes $h_1,h_2$ has weight
$\tau_{h_1 h_2}$.
 Let $e$ be the critical (minimum weight) edge  on the subpath from $j$
 to $r$. Then $\tau_{ir} = \min\{e, \tau_{ij}\}$.
 From definition, $y_r \geq \min\{y_j,e\}$.  If the search is pruned at $j$ then
$\tau_{ij} \leq y_j$. Combining we obtain 
$$\tau_{ir} = \min\{e, \tau_{ij}\}  \leq \min\{y_j,e \} \leq y_r$$ and therefore $(u \mid S)_{i (h,r)}=0$.
\end{proof}
}

\begin{algorithm2e}[h!]
\caption{SKIM meta-algorithm \label{metaskim:alg}}
{\scriptsize 
\DontPrintSemicolon 
\KwIn{IM problem: items, elements, $\revSO$ oracle for each element,
  $\forSO$ oracle for each item, function $F$}
\KwOut{Sequence of items}
\SetKwArray{Qgrand}{Qelements} 
\SetKwArray{Qcands}{Qitems} 
\SetKwArray{Qhml}{Qhml} 
\SetKwArray{Index}{index}
\SetKwArray{SeedList}{seedlist}
\SetKwArray{Rank}{rank}
\SetKwArray{cdelta}{$\delta$}
\SetKwArray{HM}{HM}
\SetKwArray{ML}{ML}
\SetKwFunction{NextSeed}{NextSeed} 
\SetKwFunction{MoveUp}{MoveUp} 
\SetKwFunction{MoveDown}{MoveDown} 
\SetKwFunction{UpdateReclassThresh}{UpdateReclassThresh} 
\SetKwArray{est}{Est}
\tcp{Initialization}
$\Rank \gets $ map elements to $\sim U[0,1]$ (or randomly permute elements 
$E$ and use $\Rank$ that maps $j$ to its position divided by $|E|$. 
);\\
\ForAll{elements $j$}{
$\Index{j} \leftarrow \perp$ \tcp*[l]{Reverse sample of $j$}
$\digest{j}.init$ \tcp*[l]{initialize digest of $j$}
$\revSO[j].init$ \tcp*[l]{Initialize reverse sorted access by $j$}
Insert  $j$ to $\Qgrand$ with priority $w_j \revSO[j].top/Rank[j]$
\tcp*[l]{place in $\Qgrand$ with priority $w_j \max_i 
  u_{ij}/\Rank{j}$}
}
\lForAll{items~$i$}{$\est.H[i] \leftarrow 0$; $\est.M[i]\gets 0$}
$\SeedList \leftarrow \perp$\tcp*{List of seeds \& marg.\ influences} 

$s\gets 0$;  $\tau  \gets\Qgrand.top/(2k)$;
$coverage \gets 0$ \tcp*{coverage of current seed set} 
\While{increase in coverage of last seed was at least $1/n^2$ of first seed}
{
\tcp{Build PPS samples of marginal influence sets until confidence in 
  next seed}
\While{$((x,\hat{I}_x) \gets \NextSeed()) =\perp$}
{
$\tau \gets \tau \lambda$;
$\MoveUp{}$ \tcp*{Update est. components} 
\ForAll{elements $j$  in $\Qgrand$ with priority $\geq \tau$}
{Remove $j$ from $\Qgrand$ \;
\While{ $(i,u_{ij}) \gets \revSO[j].top \not= \perp$}
{
\lIf(\tcp*[f]{utility below threshold}){$u_{ij} <  \digest{j}.thresh$}{
Terminate $\revSO[j]$; Break 
}
$c \gets \digest{j}.marg(u_{ij})$ 
\lIf(\tcp*[f]{Equivalently, $i\in S$}){$c ==0$}{$\revSO[j].pop$; Continue}

\If{$c/\Rank{j} < \tau$}{
  place $j$ with priority $c/\Rank{j}$ in $\Qgrand$; Break 
}
\Else(\tcp*[f]{$c/\Rank{j}\geq \tau$}){\revSO[j].pop \tcp*[l]{remove 
    $i$ from top of $\revSO[j]$} 
\Append $i$ to $\Index{j}$ \\
\lIf(\tcp*[f]{H entry}){$c \geq 
  \tau$}{$\increase{\est.H[i]}{c}$}\Else(\tcp*[f]{M 
    entry}){ $\increase{\est.M[i]}{1}$\;
  \If{$\HM{j}=\perp$}{$\HM{j}\gets |\Index{j}|$;  Insert $j$ with priority $c$ to $\Qhml$}
}
Update the priority of $i$ in $\Qcands$ to $\est.H[i]+\tau \est.M[i]$
\tcp*[l]{Can be lazy}
}
}
}
} 
\tcp{Process new seed item $i$}
$I_i \gets 0$ \tcp*{Exact marginal influence} 
Initialize $\forSO[i](S)$ \tcp*{Initialize forward search from $i$}
\While(\tcp*[f]{pop elements $j$ until $\perp$}){ $j \gets 
  \forSO[i].next \not= \perp$ }
{
\MoveDown{$j, u_{ij}$}\;  $\increase{I_i}{\digest{j}.marg(u_{ij})}$; $\digest{j}.update(u_{ij})$
}
$\increase{s}{1}$; $\increase{coverage}{I_i}$;
\SeedList.\Append{$i$,$\hat{I}_i$,$I_i$}
}
\Return{\SeedList}
} 
\end{algorithm2e}

\section{Greedy maximization} \label{metaskim:sec}
  We now present our SKIM meta-algorithm
  (Algorithm~\ref{metaskim:alg}) for approximate greedy maximization.
 The algorithm uses the oracle calls to access utility values.  It
maintains approximate marginal influence values for items (in a
lazy priority queue $\Qcands$), which
are computed from weighted samples.

We obtain weighted samples by assigning random {\em rank} values $r_j$ to elements. 
The $\revSO[j]$ oracles are used, together with a heap
structure $\Qgrand$ on elements, to return 
item element pairs by non-increasing order of $$\frac{w_j}{r_j} (u \mid
S)_{ij}\ .$$  The heap is prioritized by
$\frac{w_j}{r_j} (u \mid S)_{x(i)j}$, where $x(j)$ is the next item to be returned by
$\revSO[j]$.

The algorithm works with a  threshold value $\tau$
which decreases during the execution.  The sample $A_i$ for each item $i$ 
contains all elements $j$ such that
\begin{equation} \label{condprob:eq}
\frac{w_j}{r_j} (u \mid S)_{ij} \geq \tau\ .
\end{equation}
We also maintain
an estimate of  the  marginal influence $\Inf(i \mid S)$.  The
estimate is an inverse probability estimate  that is
computed from all elements satisfying \eqref{condprob:eq}.  The
probabilities are the inclusion probability of $j\in A_i$ over the random selection of
$r_j$:  The probability that an element $j$ satisfies \eqref{condprob:eq}
is $\min\{1,(u\mid S)_{ij}/\tau\}$ and its contribution to the marginal influence
of $i$ is $(u\mid S)_{ij}$.  The estimate is therefore
\begin{eqnarray} 
\widehat{\Inf}(i \mid S) &=& \sum_{j\in A_i} \frac{(u\mid
  S)_{ij}}{\min\{1,(u\mid S)_{ij}/\tau}\} \nonumber \\ &=& \sum_{j\in A_i} \max\{ (u
\mid S)_{ij} , \tau \}\ . \label{est:eq}
\end{eqnarray}

The threshold $\tau$ is decreased and samples and estimates are accordingly updated
until at least one item $i$ satisfies $\widehat{\Inf}(i \mid S) \geq k \tau$, for a parameter $k$. 
The item $\arg\max_i \widehat{\Inf}(i \mid S)$ with maximum 
estimate is then added to the seed set. Samples and estimates are then updated to reflect new marginal utility values.
 Note that any item with a sample of size
$\geq k$ has estimate $\geq k\tau$. Therefore, the total size of all
samples at any given point need not exceed $O(kn)$.
A choice of $k = O(\epsilon^{-2} \log n )$ would guarantee with high
probability that the selected item has marginal influence that is at
least $(1-\epsilon)$ of the maximum.  It is often possible to
adaptively determine the stopping condition on sample size and work
below this worst-case bound \cite{binaryinfluence:CIKM2014,timedinfluence:2015}.

 Efficient maintenance of the samples and estimates requires careful data
 structures and updates.   The algorithm maintains {\em inverted samples}:
$\Index{j}$ for an element $j$ contains all items such that $j$
was sampled for $i$.  The items are added using the reverse sorted
access oracle $\revSO[j]$ and kept in that order by decreasing $u_{ij}/r_j$.  Note that this
is the same order as decreasing $(u \mid S)_{ij}$ for any $S$. 
(Note that $r_j$ is fixed here, so the order of $u_{ij}$
stays the same).   The inverted sample of $j$, $\Index{j}$, is
logically partitioned into three segments.
The first segment of $\Index{j}$ contains {\em $H$ entries}, defined
as those with utility $(u\mid S)_{ij} \geq \tau$. 
The second contains {\em $M$ entries} which have utility 
$(u\mid S)_{ij} < \tau$ but satisfy \eqref{condprob:eq}, and the last includes {\em $L$ entries}.  $L$ entries 
are those that entered as $M$ or $H$, but no longer satisfy \eqref{condprob:eq}.
They are kept because they may become useful later as $\tau$
decreases.  The algorithm maintains indices $\HM{j}$ and $\ML{j}$ to
the position of each segment. It also maintains the sum $\est.H$ of $H$ entries
and the number $\est.M$ of $M$ entries, which suffice to compute the
estimate~\eqref{est:eq} as $\est.H+\tau \est.M$.

  As said, the sample data structures are modified when $\tau$ is decreased and
  when an item is added to the seed set.   The former results in
 new entries getting retrieved from the oracle and entries being
 ``upgraded'' from M/L to H and and from L to M/H.  This is done by
 the function \MoveUp.  The latter results
 in decreased marginal utility values $(u | S)_{ij}$ and in entries
 being ``downgraded'' and is handled by \MoveDown.

  To facilitate working with marginal utility $(u | S)_{ij}$ and
  computing it from  $u_{ij}$, the  algorithm maintains a {\em utility digest}
  $\digest{j}$ for
  each element $j$ (see Section \ref{digest:sec}).  The digest is initially empty and is kept current
  by the algorithm as seeds are added.  The digest support the forward
  search oracles.

  The running time (computation) bound analysis follows that of
  \cite{timedinfluence:2015} with some adaptations to our general
  aggregation functions and use of oracles.  

  We first bound the number of calls in forward searches.  An element $j$ can be returned by a
  forward search from $i$ when $u_{ij}$ is larger than the $\ell$th largest value in $U_{Sj}$.   
The total number of calls by forward searches is the sum over elements of the number of
  times top-$\ell$ set is modified as we add seed items.  
For analysis, we partition the greedy sequence into phases such that the
marginal influence decreases by at most $1-\epsilon$ in each phase.
Our approximate greedy sequence has the property that at each step we
  make a near uniform selection from all items with marginal influence
  that is at least $1-O(\epsilon)$ of the maximum.   Thus, the expected
  number of updates in each phase is at most $\ell \ln n'$, where $n'\leq n$
  is the number of items in the phase.  The algorithm terminates when
  the maximum marginal influence drops below $1/n^2$ of its initial
  value.  Therefore, the total number of phases is $O(\epsilon^{-1}
  \log n)$ and the total number of calls of the forward search oracle
  is $O(\ell n   \epsilon^{-1} \log^2 n)$.




\notinproc{
The algorithm simplifies for coverage problems with
 uniform utility to follow
 the basic SKIM design \cite{binaryinfluence:CIKM2014}.  In this case, marginal
 contributions are either $0$ or $1$ and the reverse
 sorted access oracles $\revSO[j]$ are invoked 
 sequentially by increasing $r_j$, removing the need to
 concurrently maintain active oracles for many elements.  The samples
 are uniform and do not need to be broken up to segments and
 estimates correspond to samples size.  The additional
 machinery was introduced in \cite{timedinfluence:2015} in order to handle
smooth distance-based utility functions.}

\begin{function}\caption{NextSeed()}
{\scriptsize
\KwOut{The item $i$ which maximizes $\est.H[i]+\tau \est.M[i]$, if 
  happy with estimate. 
Otherwise $\perp$.}
\SetKw{True}{true}
\SetKw{False}{false}
\While{\True}{
\lIf{max priority in $\Qcands < k\tau$}{\Return{$\perp$}}
\Else{
 Remove maximum priority $i$ from $\Qcands$\;
$\hat{I}_i \gets \est.H[i] + \tau \est.M[i]$\;
\If{$\hat{I}_i \geq k\tau$ and $\hat{I}_i \geq $  $\max$ in 
  $\Qcands$}{$I_i\gets \MargGain{i}$;\\ \lIf{$I_i \geq  (1-1/\sqrt{k})\hat{I}_i$}{\Return{$(i,\hat{I}_i)$}}\Else{Place $i$ with priority $I_i$ in 
    $\Qcands$\; \Return{$\perp$}}}
\Else{Place $i$ with priority $\hat{I}_i$ in $\Qcands$}
}}
} 
\end{function}

\begin{function}\caption{MoveUp()  Update estimates after
    decreasing $\tau$}
{\scriptsize
\DontPrintSemicolon
\ForEach{ $j$ in $\Qhml$ with priority $\geq \tau$}{ delete
  $j$ from $\Qhml$\;
\tcp{Process $\Index{j}$}
\If(\tcp*[f]{move entries from M/L to H}){$\HM{j}\not=\perp$}{
  \While{$\HM{j} < |\Index{j}|$ and 
    $i \gets \Index{j}[\HM{j}]$ satisfies 
    $(c\gets (u\mid S)_{ij} \geq \tau$}{
$\increase{\est.H[i]}{c}$ \\
 \If(\tcp*[f]{Entry was M}){$\ML{j} =\perp$ or $\ML{j} > \HM{j}$}{$\decrease{\est.M[i]}{1}$} 
  $\increase{\HM{j}}{1}$}
\If{$\ML{j}\not=\perp$ and $\ML{j} < \HM{j}$}{$\ML{j} \gets \HM{j}$}
\If{$\HM{j} \geq |\Index{j}|$}{$\HM{j}\gets\perp$;
  $\ML{j}\gets\perp$}
}
\If(\tcp*[f]{Move from L to M}){$\ML{j} \not= \perp$}
{
  \While{$\ML{j} < |\Index{j}|$ and 
    $i \gets \Index{j}[\ML{j}]$ satisfies 
    $(u \mid S)_{ij} \geq
    r_j\tau$}{$\increase{\ML{j}}{1}$; $\increase{\est.M[i]}{1}$}
\lIf{$\ML{j} \geq |\Index{j}|$}{$\ML{j}\gets\perp$}
}
$\UpdateReclassThresh{j}$ \tcp*{update $\Qhml$}
}
} 
\end{function}

\begin{function}\caption{UpdateReclassThresh()$(j)$}
{\scriptsize
\KwOut{Update priority of element $j$  in $\Qhml$}
$c\gets 0$;\\
\If{$\HM{j}\not=\perp$}{$i \gets \Index{j}[\HM{j}]$;
  $c\gets (u\mid S)_{ij}$}
\If{$\ML{j}\not=\perp$}{$i \gets \Index{j}[\ML{j}]$;
  $c\gets \max\{c, (u \mid S)_{ij}/r_j\}$}
\If{$c>0$}{update priority of $j$ in $\Qhml$ to $c$} 
}
\end{function}

\begin{function}[t]\caption{MoveDown() $(j ,x)$}
{\scriptsize
\KwOut{Update estimate contribution of element $j$ when item $h$ with utility $x=u_{hj}$ is
    added to $S$ to have $S^{+} = S \cup\{h\}$. }
\tcp{For each item $i\not\in S$, the marginal utility values $(u
    \mid S^+)_{ij} \leq (u  \mid S)_{ij}$ can be computed from
    $\digest{j}$ (computed for $S$) using $x=u_{hj}$ and $u_{ij}$:
 $(u \mid S)_{ij}
    \equiv \digest[j].marg(u_{ij})$   $(u \mid S^+)_{ij} \equiv
    \digest[j].{\text AddMarg}(u_{ih},u_{ij})$.}
\DontPrintSemicolon
$y \gets 0$; $t \gets \perp$; $\HM{j}\gets \perp$; \\ 
$z \gets
|\Index{j}|-1$; 
\lIf(\tcp*[f]{last non-L position in \Index{j}})
{$\ML{j} \not= \perp$}{$z \gets\ML{j}$}
$\ML{j}\gets
\perp$ \\
\While{$y \leq z$}{
$i \gets \Index{j}[y]$;\\
 \If(\tcp*[f]{entry was H}){$(u \mid S)_{ij} \geq \tau$}
 {$\decrease{\est.H[i]}{(u \mid S)_{ij} }$;\\ 
  \If(\tcp*[f]{is H}){$(u \mid S^+)_{ij} \geq
    \tau$}{$\increase{\est.H[i]}{(u \mid S^+)_{ij} }$}
 \ElseIf(\tcp*[f]{is M}){$(u \mid S^+)_{ij} \geq r_j 
    \tau$}{$\increase{\est.M[i]}{1}$; \lIf{$\HM{j}=\perp$}{$\HM{j}=y$}}
  \ElseIf(\tcp*[f]{truncate}){$(u \mid S^+)_{ij} =0$}{\lIf{$t=\perp$}{$t=y$}}
  \Else(\tcp*[f]{is L}){\lIf{$\ML{j}=\perp$}{$\ML{j}
        \gets y$}}
 }
\ElseIf(\tcp*[f]{entry was M}){($u \mid S)_{ij}  \geq r_j
  \tau$}
{
\If(\tcp*[f]{is M}){$(u \mid S^+)_{ij}  \geq r_j
  \tau$}{\lIf{$\HM{j}=\perp$}{$\HM{j} \gets y$}}
 \Else(\tcp*[f]{is not M}){$\decrease{\est.M[u]}{1}$;\\
  \eIf(\tcp*[f]{truncate}){$(u \mid S^+)_{ij}  = 0$}{\lIf{$t=\perp$}{$t=y$}}(\tcp*[f]{is
    L}){\lIf{$\ML{j}=\perp$}{$\ML{j}\gets y$}}
}}
Priority of $i$ in $\Qcands$ $\gets \est.H[i]+\tau \est.M[i]$
\tcp*[l]{Can be lazy}
 $\increase{y}{1}$
}
\lIf{$t\not=\perp$}{truncate $\Index{j}$ from $t$ on.}
\Else(\tcp*[f]{clean tail}){$t\gets |\Index{j}|-1$; \\
\While{$t \geq 0$ and $i \gets \Index{j}[t]$
    has $(u \mid S^+)_{ij} = 0$}{$\decrease{t}{1}$}
truncate $\Index{j}$ at position $t+1$ on}
Remove element $j$ from $\Qhml$\;
$\UpdateReclassThresh{j}$ \tcp*{Update $\Qhml$}
} 
\end{function}

  \small
\notinproc{\bibliographystyle{plain}}
\onlyinproc{\bibliographystyle{IEEEtran}}
\bibliography{cycle}

\end{document}